\newtheorem{theo}{Theorem}
\newtheorem{assumption}{Assumption}
\newtheorem{remark}{Remark}
\begin{document}
\setlength\textfloatsep{8pt}
\setlength\intextsep{8pt}
\title{Reference-Free Iterative Learning Model Predictive Control with Neural Certificates}

\author{Wataru Hashimoto, Kazumune Hashimoto, Masako Kishida, and Shigemasa Takai
\thanks{{Wataru Hashimoto, Kazumune Hashimoto, and Shigemasa Takai are with the Graduate School of Engineering, The University of Osaka, Suita, Japan (e-mail: hashimoto@is.eei.eng.osaka-u.ac.jp, \{hashimoto, takai\}@eei.eng.osaka-u.ac.jp). Masako Kishida is with the National Institute of Informatics, Tokyo, Japan (email: kishida@nii.ac.jp). The corresponding author is Wataru Hashimoto.
This work is supported by JST CREST JPMJCR201, JST ACT-X JPMJAX23CK, and JSPS KAKENHI Grant 21K14184, and 22KK0155.
}}
}



\maketitle

\begin{abstract}
In this paper, we propose a novel reference-free iterative learning model predictive control (MPC). In the proposed method, a certificate function based on the concept of Control Lyapunov Barrier Function (CLBF) is learned using data collected from past control executions and used to define the terminal set and cost in the MPC optimization problem at the current iteration.
This scheme enables the progressive refinement of the MPC's terminal components over successive iterations. Unlike existing methods that rely on mixed-integer programming and suffer from numerical difficulties, the proposed approach formulates the MPC optimization problem as a standard nonlinear program, enabling more efficient online computation.
The proposed method satisfies key MPC properties, including recursive feasibility and asymptotic stability. Additionally, we demonstrate that the performance cost is non-increasing with respect to the number of iterations, under certain assumptions. Numerical experiments including the simulation with PyBullet confirm that our control scheme iteratively enhances control performance and significantly improves online computational efficiency compared to the existing methods.

\end{abstract}

\begin{IEEEkeywords}
Model predictive control, iterative control, certificate function, neural network. 
\end{IEEEkeywords}

\section{Introduction}
\IEEEPARstart{M}{odel} Predictive Control (MPC) is one of the most prominent and widely studied methodologies in control literature, celebrated for its solid theoretical foundations and extensive range of applications across various domains, including industrial process control, robotics, autonomous vehicles \cite{MPC}.
However, since MPC determines a control input by solving a finite-time horizon optimal control problem at each time step, this limited foresight can lead to suboptimal decisions that do not sufficiently account for the system's long-term behavior. 

To address this issue, iterative strategies for MPC have been developed, where the control with MPC is executed iteratively for the same or similar tasks, and the control performance is incrementally improved based on the data collected from previous iterations. 
For example, in \cite{ILMPC1,ILMPC2,ILMPC3,ILMPC4,ILMPC5,ILMPC6}, the combination of iterative learning control (ILC) \cite{ILC1,ILC2} and MPC has been explored for reference tracking control problems, and it is demonstrated that the tracking error converges to zero as the number of iterations increases. In \cite{iterative1,iterative2,UgoAdd}, reference-free iterative learning MPC strategies are proposed, which refine the terminal cost and constraints of the MPC using past trajectory data. This approach theoretically and empirically ensures a non-increasing performance cost over successive iterations. By eliminating the need for a tracking reference, this method enhances the flexibility in choosing control actions compared to the reference tracking methods. 
However, since the resulting optimization involves mixed-integer programming (MIP), it poses challenges in terms of computational burden during online execution, potentially limiting its practical applicability. Moreover, the method proposed in \cite{iterative1,iterative2,UgoAdd} requires the terminal state to coincide with one of the states visited in previous iterations, which is practically challenging to implement rigorously.

To address the limitations of previous studies, this paper proposes a novel reference-free iterative learning MPC framework that utilizes neural certificate functions. In the proposed method, the neural certificate is learned from trajectory data collected in earlier MPC iterations and is used to define the terminal set and cost in MPC. This certificate is progressively refined with additional trajectory data, thereby enhancing the terminal constraint and cost in subsequent MPC iterations. Inspired by the principles of Control Lyapunov-Barrier Functions (CLBF) \cite{CLBF,NeuralCertificate2}, the neural certificate is learned to certify both the forward invariance of the safe region and the stability of the goal state.
With this strategy, the proposed method ensures desirable properties such as recursive feasibility, stability of the equilibrium point, and constant improvement in control performance along with the number of iterations, under certain assumptions. Moreover, since the resulting optimization problem is formulated as a standard nonlinear program, the method allows for more efficient computation during online execution compared to existing approaches, albeit at the expense of offline computation for learning the certificate function.

\textbf{Related works on iterative learning MPC:}
Control strategies for repetitive tasks that can improve control performance by effectively utilizing the data from previous experiences have been extensively studied in the literature on iterative learning control (ILC) \cite{ILC1,ILC2} for several decades.
To explicitly address the state constraints and guarantee stability of the closed-loop systems, approaches that integrate ILC with MPC have gained popularity in recent years \cite{ILMPC1,ILMPC2,ILMPC3,ILMPC4,ILMPC5,ILMPC6}. The work in \cite{ILMPC1} is one of the first to combine ILC with General Predictive Control (GPC), demonstrating significant improvements in the control performance. Further researches such as \cite{ILMPC2,ILMPC3,ILMPC4,ILMPC5,ILMPC6} consider iterative learning MPC strategies for general nonlinear systems and theoretically prove the convergence to the reference trajectory. Nonetheless, these methods rely on a predefined reference trajectory, which limits their applicability.
To achieve reference-free iterative learning MPC, the authors of the works \cite{iterative1,iterative2} proposed a way to effectively refine the terminal components of the MPC problem with trajectory data collected in past iterations. This scheme enables us to solve the infinite-time optimal control problem for linear systems without reference by iteratively performing finite-time horizon MPC, assuming that initially at least one feasible (not optimal) solution to the control problem is provided \cite{UgoOpt}. For nonlinear systems \cite{iterative1}, it ensures the performance cost is non-increasing \cite{iterative1}. This method has been further extended to uncertain linear systems \cite{iterative3}, nonlinear probabilistic systems \cite{iterative4}, unknown nonlinear systems \cite{self2}, and multi-agent systems \cite{iterative5}. It has also been tested on several challenging applications, including autonomous racing \cite{AV} and surgical robot \cite{surgical}. 

\textbf{Related works on the learning-based construction of terminal components:}
The terminal costs and constraints play a crucial role in rigorously guaranteeing the recursive feasibility and stability of MPC. Traditionally, these terminal components are designed using the Control Lyapunov Function (CLF) \cite{MPCCLF1,MPCCLF2}. 
Previous works often construct CLFs based on linearization around the equilibrium point \cite{MPCCLF1}, which restricts the resulting terminal set to a small neighborhood of the equilibrium and leads to conservative control performance and difficulty in dealing with short prediction horizons.
Consequently, learning-based approaches to building terminal components have gained increasing attention in recent years. The authors of the works \cite{RL1,RL2,RL3,RL4,RL5,RL6,RL7,RL8} consider using Approximate Dynamic Programming (ADP) or Reinforcement Learning (RL) to learn the terminal components. While these approaches have demonstrated effectiveness, they often require the full implementation of ADP or RL, which can be computationally intensive. Moreover, these methods typically do not address the explicit construction of the terminal set, leaving the safety guarantees beyond the prediction horizon either unverified or dependent on additional assumptions. The aforementioned iterative learning MPC scheme \cite{iterative1,iterative2} considers defining terminal components with state trajectories in the past iterations, which enables simpler construction of the terminal components.

\textbf{Related works on learning-based certificate functions:} 
Certificate functions, such as CLF and Control Barrier Function (CBF), are instrumental in expressing desirable properties of dynamical systems, including stability and safety \cite{CBF}. However, identifying suitable certificate functions for a system remains a complex and challenging task in general. 
To overcome this problem, there has been a growing interest in employing neural networks to learn certificate functions \cite{NeuralCertificate1,NeuralCertificate2,NeuralCertificate3,NeuralCertificate4,NeuralCertificate5,NeuralCertificate6,NeuralCertificate7,NeuralCertificate8,RobustCBF,MetaCBF}. While neural network-based approaches have demonstrated flexibility in constructing certificates and achieving larger region of attraction compared to other methods, several practical challenges persist. One notable issue is the difficulty in collecting training samples. while most of the previous works assume that the samples from the safe region are freely available \cite{NeuralCertificate1,NeuralCertificate2,NeuralCertificate3,NeuralCertificate4,NeuralCertificate5} or expert trajectories are given \cite{NeuralCertificate6,RobustCBF}, finding safe regions for sampling or obtaining expert trajectories is often not straightforward. 

\textbf{Contributions:} The contributions of our proposed method, along with a comparison to existing approaches, are summarized in the following. 
First, this paper proposes a reference-free iterative learning MPC strategy that utilizes a neural certificate function, learned from data collected in previous iterations, as both the terminal constraint and terminal cost. This strategy facilitates the iterative enhancement of control performance as the number of iterations increases, while guaranteeing essential MPC properties such as recursive feasibility and closed-loop stability. Unlike the method proposed in \cite{iterative1}, which iteratively improves the terminal set and cost but results in a computationally expensive mixed-integer programming problem, our method simplifies the problem to a standard nonlinear programming problem. This reduces computation time during online control execution. Moreover, the proposed method does not require full implementation of RL or dynamic programming as the previous works \cite{RL1,RL2,RL3,RL4,RL5,RL6,RL7,RL8}. 

Second, the proposed method is also potentially advantageous from the perspective of neural certificate function literature. Specifically, the proposed iterative learning MPC formulation facilitates the exploration of previously unseen safe regions and thus enables a system to have a systematic data collection process for learning certificates without the need for expert trajectories or explicit knowledge of safe invariant regions.

Lastly, the simulation study, including the experiment with PyBullet simulator \cite{pybullet} demonstrates that the proposed method iteratively enhances control performance while significantly improving online computational efficiency compared to existing approaches.
\section{Preliminaries}\label{pre}
In this section, we summarize some preliminaries including the system descriptions and the goal of this paper.
\subsection{System Description}
We focus on the control of a nonlinear, discrete-time dynamical system, which is expressed in the general form:
\begin{align}\label{dynamics}
    &x_{t+1} = f(x_t, u_t), \quad x_t \in \mathcal{X}, \ u_t\in \mathcal{U},
\end{align}
where \(x_t \in \mathcal{X} \subseteq \mathbb{R}^n\) and \(u_t \in \mathcal{U} \subseteq \mathbb{R}^m\) denote the state vector and control input at a discrete time step \(t \in \mathbb{N}\). The sets \(\mathcal{X}\) and \(\mathcal{U}\) define the domain of interest in state space and control input constraints, respectively. 
The function \(f: \mathbb{R}^n \times \mathbb{R}^m \rightarrow \mathbb{R}^n\) defines the system dynamics, mapping the current state and control input to the next state. We assume that $f$ is continuous. In addition, we denote a set of unsafe sets to avoid (e.g., obstacle regions) by $\mathcal{A}\subset \mathcal{X}$. 
\subsection{Goal of this paper}
We first consider the following infinite-time optimal control problem:
\begin{subequations}\label{problem}
\begin{align}
J^*_{0\rightarrow \infty}(x_s) =& \min_{u_0,u_1,\ldots}\sum_{k=0}^{\infty} \gamma^k \ell(x_k,u_k),\label{prob1:cost}\\
&\mathrm{s.t.}\  x_{k+1} = {f}(x_k,u_k),\quad \forall k\geq 0, \label{prob1:system}\\
&\qquad x_0=x_s\in \mathcal{X}\backslash \mathcal{A},\\
&\qquad x_k  \in \mathcal{X}\backslash \mathcal{A}, \quad \forall k\in \{1,2,\ldots \}, \label{prob1:state const}\\
& \qquad u_k \in \mathcal{U}, \quad \forall k\in \{0,1,\ldots \}, \label{prob1:input const}
\end{align}
\end{subequations}
where $x_s\in\mathcal{X}\backslash \mathcal{A}$ is the initial state, the function $\ell:\  \mathbb{R}^{n}\times \mathbb{R}^{m}\rightarrow \mathbb{R}_{\geq 0}$ represents the stage cost of the control problem, and $\gamma$ is the discount factor with $0<\gamma<1$. 
We impose the following assumptions on the function $\ell$ and the optimization problem (\ref{problem}), which are fairly common in the optimal control literature.
\begin{assumption}\label{assumption:stagecost}
    The stage cost function $\ell$ is continuous and satisfies the following.
    \begin{align}
        &\ell(x_F,0) = 0, \\
        &\ell(x_t,u_t)> 0,\ \forall x_t\in \mathbb{R}^n\backslash \{x_F \},\ u_t\in \mathbb{R}^m\backslash \{0\},
    \end{align}
    where the final state $x_F\in \mathcal{X} \setminus \mathcal{A}$ is assumed to be an equilibrium of the unforced system (\ref{dynamics}), i.e., $f(x_F,0)=x_F$.
\end{assumption}
\begin{assumption}\label{assumption:feasibility}
    A local optimal solution to (\ref{problem}) exists.
\end{assumption}
Since directly solving the optimization problem (\ref{problem}) is challenging, we approach it by iteratively executing finite-horizon MPC. At each iteration $j\geq1$, we perform control with MPC and collect trajectory data, \(\{x_t^j, u_t^j\}_{t=0}^\infty \), where \(x_t^j\) and \(u_t^j\) represent the state and control input at time \(t\) in iteration \(j\). Then, the MPC formulation is improved with the collected data.
Throughout this paper, we assume that the initial state is fixed across the iterations:
\begin{assumption}\label{assumption:initial}
We assume that the initial state at each iteration is fixed to \(x_s \in \mathcal{X}\), i.e., $x_0^j = x_s \in \mathcal{X} \setminus \mathcal{A}, \ \forall j \in \mathbb{N}_{\geq 1}$.
\end{assumption}

Such a setting is previously considered in the iterative learning MPC framework proposed in \cite{iterative1,iterative2}. In these studies, the authors use the fact that the set consisting of all the samples collected in the previous iterations $\mathcal{SS}^j=\{ \{x_t^i\}_{t=0}^\infty \}_{i=0}^j$ is a subset of the safe maximal stabilizable set to $x_F$ (i.e., the set with maximum volume from which there exists a control sequence that can drive the system to $x_F$ while ensuring constraints (\ref{prob1:state const}) and (\ref{prob1:input const})), and use it as the terminal set of the MPC. However, these approaches often require the terminal state to exactly match one of the stored states in $\mathcal{SS}^j$, resulting in a computationally demanding mixed-integer programming problem.

Motivated by the above discussion, the objective of this paper is to develop a novel iterative learning MPC framework with the following properties: (i) The closed-loop system converges asymptotically to the target state $x_F$ (ii) The constraints (\ref{prob1:state const}) and (\ref{prob1:input const}) are satisfied at all the time instances (iii) The performance cost denoted as $J_{0\rightarrow \infty}^j(x_s) = \sum_{t=0}^{\infty}\gamma^t\ell(x_t^j,u_t^j)$  is non-increasing with respect to the iteration number $j$ (iii) the resulting optimization is formulated as a standard nonlinear programming problem, which can be efficiently solved.
In the following, Section~\ref{proposed} introduces the proposed iterative learning MPC scheme to this end, followed by a discussion of its theoretical properties in Section \ref{sec:theo}.
\begin{remark}
    In practice, each iteration has a finite-time duration. However, for the sake of analytical simplicity, the literature frequently employs an infinite time formulation for each iteration. In this paper, we adopt the same approach, and this choice does not affect the practical applicability of our proposed method.
\end{remark}

\begin{remark}
   For simplicity in the theoretical analysis, we assume a common initial state across all iterations as described in Assumption \ref{assumption:initial}. However, the proposed method remains applicable even with varying initial states, as long as the optimization problem is feasible at the initial time. In this case, recursive feasibility and the stability of $x_F$ of the proposed MPC scheme are still ensured (see Section \ref{sec:theo}).
\end{remark}

\section{Proposed Method}\label{proposed}
\begin{figure}[tb]
 \begin{center}
  \includegraphics[width=1\hsize]{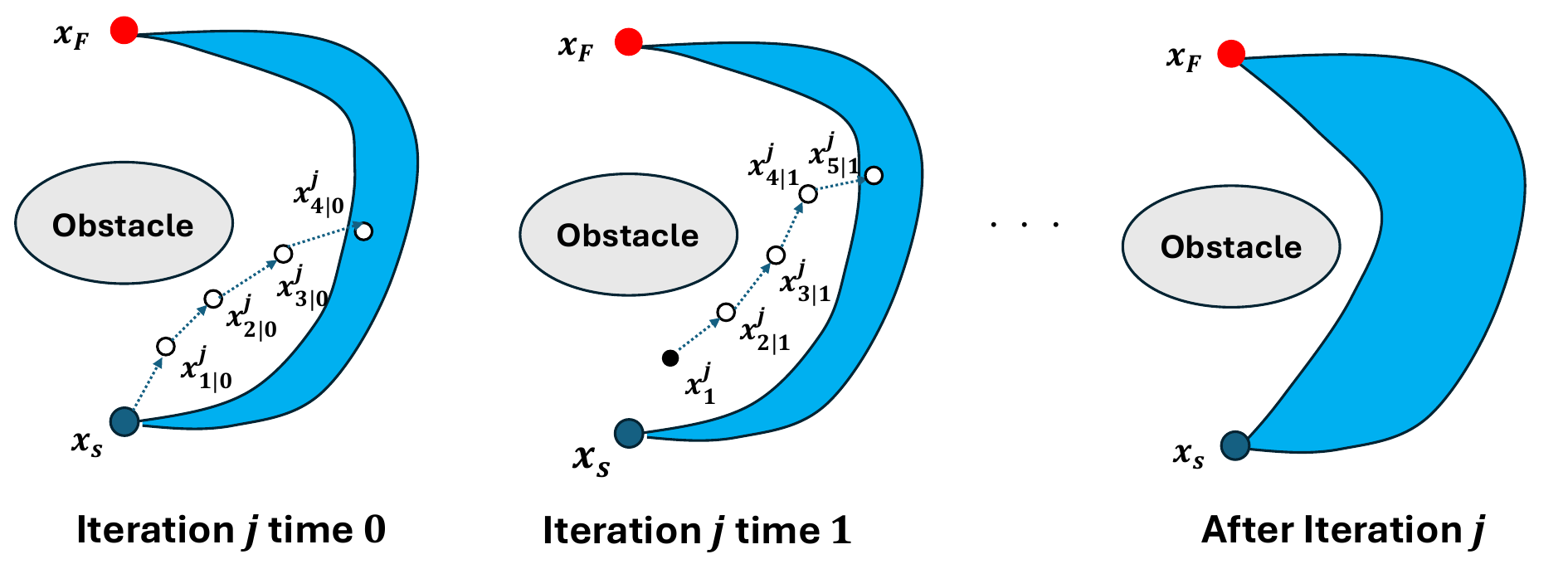}
 \end{center}
 \caption{The illustrative explanation of the proposed MPC scheme. The blue-colored regions represent the terminal set, i.e., the set $\{x\in \mathcal{X}\mid V_{\theta_{j-1}}\leq c\}$.
 At time $t$ in iteration $j$, the optimization (\ref{problem2}) is solved, and the control input (\ref{MPC input}) is applied to the system (\ref{dynamics}) and the next state $x_{t+1}^j$ is observed. Then (\ref{problem2}) is solved with the initial state $x_{t+1}^j$ (left and middle). 
 This process is repeated until convergence. After iteration $j$, the terminal function is updated based on the data collected in the past iterations (right).}
 \label{fig:terminal}
\end{figure}
In this section, we explain the proposed iterative learning MPC scheme to achieve the goal discussed in Section \ref{pre}. First, we introduce the formulation of the proposed MPC optimization problem and explain the overall iterative learning MPC procedure in Section \ref{subsec:formulation}. Then, the detailed construction method of the terminal function is discussed in Section \ref{subsec:terminalcost}.
\subsection{MPC Formulation}\label{subsec:formulation}
The proposed MPC optimization problem is defined as follows:
\begin{subequations}\label{problem2}
\begin{align}
&J^{\mathrm{LMPC},j}_{t\rightarrow t+N}({x}_t^j) = \min_{{u}^j_{t|t},\ldots,{u}^j_{t+N-1|t}}\left[ \sum_{k=t}^{t+N-1} \gamma^k \ell({x}^j_{k|t},{u}^j_{k|t})\right.\notag \\
& \qquad \qquad \qquad \left. + \gamma^{t+N} V^{j-1}(x_{t+N|t}^j) \right]\\
&\mathrm{s.t.}\notag \\
&\quad {x}^j_{k+1|t} = {f}({x}^j_{k|t},{u}^j_{k|t}),\  \forall k\in [t,\ldots,t+N-1],\label{const dyn}\\
&\quad {x}^j_{k|t} \in \mathcal{X}\backslash \mathcal{A},\ \forall k\in [t,\ldots,t+N-1],\label{state const}\\
&\quad {u}^j_{k|t} \in \mathcal{U},\ \forall k\in [t,\ldots,t+N-1],\label{input const}\\
&\quad V^{j-1}(x_{t+N|t}^j)\leq c \label{terminal},\\
&\quad {x}^j_{t|t} = {x}_t^j\label{const init},
\end{align}
\end{subequations}
where (\ref{const dyn})  and (\ref{const init}) represent the system dynamics and initial condition, respectively. The state and input constraints are given by (\ref{state const}) and (\ref{input const}) respectively. The constraint (\ref{terminal}) is the terminal constraint that enforces the terminal state into the safe invariant set defined by $V^j$. The terminal cost is also represented by the function $V^j$. The concrete definition and construction method of the function $V^j$ is discussed in Section \ref{subsec:terminalcost}.
We denote the optimal solution and corresponding predicted state trajectory at time $t$ in iteration $j$ as
\begin{align}
    {u}^{j,*}_{t:t+N-1|t} &= [{u}^{j,*}_{t|t},\ldots, {u}^{j,*}_{t+N-1|t}].\label{sol inputs}\\
     {x}^{j,*}_{t+1:t+N|t} &= [{x}^{j,*}_{t+1|t},\ldots, {x}^{j,*}_{t+N|t}].\label{sol states}
\end{align}
Then, at each time $t$ in iteration $j$, the following control input is applied to the system (\ref{dynamics}):
\begin{align}\label{MPC input}
u_t^j =u^{j,*}_{t|t}.
\end{align}
After applying the control input $u_t^j$ to the system (\ref{dynamics}), the state at the next time step $x_{t+1}^j$ is observed. Then, the finite time optimal control problem (\ref{problem2}) is solved again at time $t+1$, from the new initial state $x^j_{t+1|t+1}=x_{t+1}^j$, yielding a receding horizon control strategy. After the iteration, the dataset for learning the function $V^j$ is updated as $\mathcal{D}_{j}\leftarrow \mathcal{D}_{j-1}\cup \{x_t^j,u_t^j\}_{t=0}^\infty$. Then, $V^j$ is updated based on the data $\mathcal{D}_j$ with the procedures discussed in Section \ref{subsec:terminalcost}. Since the proposed iterative learning MPC scheme cannot be executed without an appropriate initial terminal function, $V^0$, we impose the following assumption.

\begin{assumption}\label{initialCLBF}
    The initial dataset $\mathcal{D}_0$ for learning the initial terminal function $V^0$ is given. The certified region defined by $V^0$, $\{x \in \mathcal{X} \mid V^0(x) \leq c\}$, is non-empty and includes all the points in $\mathcal{D}_0$. Additionally, the problem (\ref{problem2}) is feasible at time 0.
\end{assumption}
The whole proposed control scheme is summarized in Algorithm \ref{alg}.
In the following subsection, the construction method of the function $V^j$ used to define the terminal set and cost is explained. 
\begin{algorithm}[t]
{\small
\SetKwInOut{Input}{Input}
\SetKwInOut{Output}{Output}
\Input{ $x_s$ (initial state); $f$ (dynamics); $V^0=V_{\theta_0}$ (initial terminal function); $N_{\mathrm{ite}}$ (number of iterations); $N$ (horizon length); $T$ (execution time steps); $\mathcal{D}_0$ (initial dataset); $\gamma$ (discount factor)}

\For{$j=1,2,\ldots,N_{\mathrm{ite}}$}{
Set the initial state by $x_0^j=x_s$;\\
\For{$t=0,1,\ldots,T$}{

Solve (\ref{problem2}) and obtain the solutions (\ref{sol inputs}) and (\ref{sol states});\\
Apply the control input (\ref{MPC input}) to the system (\ref{dynamics}) and observe the next state $x_{t+1}^j$;\\

}
Update the dataset: $\mathcal{D}_{j}\leftarrow \mathcal{D}_{j-1}\cup \{x_t^j,u_t^j\}_{t=0}^\infty$;\\
Update $V^{j-1}=V_{\theta_{j-1}}$ to $V^j=V_{\theta_{j}}$ based on the data $\mathcal{D}_j$ with the procedure in Section \ref{subsec:terminalcost};

}

    \caption{The proposed control strategy}\label{alg}
    }
\end{algorithm}

\begin{algorithm}[t]
\caption{Learning terminal function $V^j$}
\label{alg:training_CLBF_compact}
\KwIn{$\mathcal{D}_j$ (trajectory data at iteration $j$); $f$ (dynamics); $x_F$ (terminal state); $a_1-a_5$, $c$ (tuning parameters); $\gamma$ (discount factor); $k_{\mathrm{val}}$ (time interval for validation)}

Initialize the NN parameters $\theta_j$ and $\phi_j$\;
Construct the $\alpha$-shape boundary $\mathcal{B}_{\alpha}$ using $\mathcal{D}_j$\;
Define $\mathcal{X}_{\mathrm{safe}} = \{x \mid x \notin \mathcal{B}_{\alpha} \}$, $\mathcal{X}_{\mathrm{unsafe}} = \{x \mid x \in \mathcal{B}_{\alpha} \}$\;
Construct sets of samples $\mathcal{D}_{\mathrm{safe}} \subset \mathcal{X}_{\mathrm{safe}}$ and $\mathcal{D}_{\mathrm{unsafe}} \subset \mathcal{X}_{\mathrm{unsafe}}$\;

\For{$k = 1$ \textbf{\textrm{to}} $N_{\mathrm{iter}}$}{
    Compute loss (\ref{lossCLBF}) with $\mathcal{D}_{\mathrm{safe}}$ and $\mathcal{D}_{\mathrm{unsafe}}$\; 
    Update the parameters $\theta_j$ and $\phi_j$ with SGD or adam\;
    \If{$k \bmod k_{\mathrm{val}} = 0$}{
        Construct sample set $\mathcal{D}_{\mathrm{val}}\subset \mathcal{X}$ and identify $\mathcal{C}_{\mathrm{val}} \subseteq \mathcal{D}_{\mathrm{val}}$ violating (\ref{cond:CLBF1})-(\ref{cond:CLBF6})\;
        Update sample sets: 
        $\mathcal{D}_{\mathrm{safe}} \gets \mathcal{D}_{\mathrm{safe}} \cup (\mathcal{C}_{\mathrm{val}} \cap \mathcal{X}_{\mathrm{safe}})$, 
        $\mathcal{D}_{\mathrm{unsafe}} \gets \mathcal{D}_{\mathrm{unsafe}} \cup (\mathcal{C}_{\mathrm{val}} \cap \mathcal{X}_{\mathrm{unsafe}})$, 
        
    }
}
\end{algorithm}

\subsection{Construction of terminal set and cost}\label{subsec:terminalcost}
In this subsection, we elaborate on how to construct the function $V^j$ in the optimization problem (\ref{problem2}) based on the trajectory data collected in the previous iterations $\mathcal{D}_j=\{\{(x_t^i,u_t^i)\}_{t=0}^\infty \}_{i=1}^{j}$ so that the terminal region $\{x\in \mathcal{X}\mid V^{j-1}(x)\leq c\}$ to be a subset of safe maximal stabilizable set to $x_F$ and yields desirable MPC properties discussed in Section \ref{pre}. We construct the function $V^j$ based on the concept of CLBF \cite{NeuralCertificate2}, which simultaneously encodes stability and safety properties. This approach enables us to avoid the computationally intensive mixed-integer programming formulation as in \cite{iterative1} and reduce the optimization problem to a standard nonlinear programming problem.
More specifically, we consider constructing a function $V^j$ that satisfies the following conditions for an appropriately chosen constant $c>0$:
\begin{subequations}
    \begin{align}
        &V^j(x_F)=0,\label{cond:CLBF1}\\
        &V^j(x)>0,\ \forall x\in \mathcal{X}\backslash x_{F},\label{cond:CLBF2}\\
        &V^j(x)\leq c,\ \forall x\in \mathcal{X}_{\mathrm{safe}},\label{cond:CLBF3}\\
        &V^j(x)>c,\ \forall x\in \mathcal{X}_{\mathrm{unsafe}},\label{cond:CLBF4}\\
        &\inf_{u\in \mathcal{U}} V^j(f(x,u))-V^j(x)\leq 0,\ \forall x\in \mathcal{X}_{\mathrm{safe}},\label{cond:CLBF5}\\
        &\inf_{u\in \mathcal{U}} \gamma V^j(f(x,u))-V^j(x)+\ell(x,u) \leq 0,\ \forall x\in \mathcal{X}_{\mathrm{safe}},\label{cond:CLBF6}\\
        &\gamma V^j(x^{j}_{k+1})-V^j(x^{j}_k)+\ell(x^{j}_k,u^{j}_k) \geq 0,\ \forall k\in \mathbb{N}_{\geq 0},\label{cond:CLBF7}
    \end{align}
    \end{subequations}
where $\mathcal{X}_{\mathrm{unsafe}}$ is some super set of $\mathcal{A}$ and $\mathcal{X}_{\mathrm{safe}}$ is a subset of the safe maximal stabilizable set to $x_F$ from which the training samples are drawn.The conditions (\ref{cond:CLBF1})-(\ref{cond:CLBF5}) are closely related to the CLBF condition in \cite{NeuralCertificate2}. We extend it from a continuous-time system to a discrete-time system. 
The conditions (\ref{cond:CLBF6}) and (\ref{cond:CLBF7}) are introduced to ensure the stability of the resulting MPC and to guarantee a non-increasing control performance with respect to the number of iterations (see Section \ref{sec:theo}).

To find the function $V^j$ that meets the conditions (\ref{cond:CLBF1})-(\ref{cond:CLBF7}) from data, we represent the function \( V^j \) and the corresponding control policy using neural networks, denoted as \( V_{\theta_j} \) and \( \pi_{\phi_j} \), with parameters \( \theta_j \) and \( \phi_j \), respectively. Then, we learn them to satisfy conditions (\ref{cond:CLBF1})–(\ref{cond:CLBF7}).
To ensure the condition (\ref{cond:CLBF2}) by construction, we define the NN structure of $V_{\theta_j}$ as $V_{\theta_j}(x)=w_{\theta_j}(x)^\top w_{\theta_j}(x)\geq 0$, where $w_{\theta_j}(x)$ is the feedforward neural network.
These parameters are then learned to minimize the following loss function:
\begin{align}\label{lossCLBF}
    &\mathrm{loss} = V_{\theta_j}(x_F)^2+\frac{a_1}{N_{\mathrm{safe}}}\sum_{x\in \mathcal{X}_{\mathrm{safe}}}[V_{\theta_j}(x) -c]_+\notag \\
    & \quad + \frac{a_2}{N_{\mathrm{unsafe}}}\sum_{x\in \mathcal{X}_{\mathrm{unsafe}}}[c -V_{\theta_j}(x)]_+\notag \\
    +& \frac{a_3}{N_{\mathrm{safe}}} \sum_{x\in \mathcal{X}_\mathrm{safe}}[ V_{\theta_j}(f(x,\pi_{\phi_j}(x)))-V_{\theta_j}(x)]_+\notag \\
    +& \frac{a_4}{N_{\mathrm{safe}}} \sum_{x\in \mathcal{X}_\mathrm{safe}}[ \gamma V_{\theta_j}(f(x,\pi_{\phi_j}(x)))-V_{\theta_j}(x)+\ell(x,\pi_{\phi_j}(x))]_+\notag \\
    &+a_5 [-\gamma V_{\theta_j}(f(x^j_k,u^j_k)+V_{\theta_j}(x^j_k)-\ell(x^j_k,u^j_k)]_+.
\end{align}
where $a_1$, $a_2$, $a_3$, $a_4$, and $a_5$ are positive tuning parameters, $N_{\mathrm{safe}}$ and $N_{\mathrm{unsafe}}$ are the number of samples from $\mathcal{X}_{\mathrm{safe}}$ and $\mathcal{X}_{\mathrm{unsafe}}$, respectively, and $[o]_+=\max (o,0)$.
The terms in this loss function are directly linked to conditions (\ref{cond:CLBF2})-(\ref{cond:CLBF7}).
This loss is optimized using a stochastic gradient-based method, such as stochastic gradient descent (SGD) or Adam. 
In our setting, we can use state trajectories obtained from past iterations, $\mathcal{D}_j$, as safe samples because of the constraints (\ref{state const}) and (\ref{terminal}), and take unsafe samples from the unsafe set $\mathcal{A}$. 
Although the samples in $\mathcal{D}_j$ may be sparse in some regions, this issue can be addressed through several interpolation techniques. We discuss such practical strategies in Section~\ref{subsec:practical}.

Notably, the proposed MPC formulation (\ref{problem2}) enforces only the terminal state to lie within the terminal set, while allowing intermediate states to leave the certified region and the execution of the MPC naturally facilitates exploration of previously unseen safe areas and enables us to collect data to improve the certificates (see also Fig \ref{fig:terminal}). Since finding the set for sampling $\mathcal{X}_{\mathrm{safe}}$ is not straightforward and many works regarding learning certificate function literature \cite{review} assume that $\mathcal{X}_{\mathrm{safe}}$ or expert trajectories are initially given, this can be seen as one of an advantage of the proposed method.

\subsubsection{Verification of $V^j$}
Since the function $V^j$ is learned by minimizing the loss function over a finite samples, its validity across the entire state space cannot be guaranteed. To address this, we employ a verification and synthesis scheme that periodically checks for the satisfaction of conditions (\ref{cond:CLBF2})--(\ref{cond:CLBF7}) and identifies samples that violate them. In this paper, at regular intervals, we sample $N_{\mathrm{test}}$ points from the state space to evaluate the satisfaction of conditions. Any samples found to violate the conditions are then added to the dataset for further training. Other verification techniques used in learning certificate function literature \cite{review} can also be used. The whole training procedures are summarized in Algorithm \ref{alg:training_CLBF_compact}. 

\subsubsection{Practical considerations}\label{subsec:practical}
Here, we discuss some practical considerations for learning the function $V^j$.

\textbf{Construction of $\mathcal{X}_{\mathrm{safe}}$ and $\mathcal{X}_{\mathrm{unsafe}}$:}
As previously mentioned in this section, we can define the safe samples for evaluating the loss (\ref{lossCLBF}) by all of the states within $\mathcal{D}_j$ since the states in $\mathcal{D}_j$ are guaranteed to be sampled from an actual invariant safe set, and there always exists a sequence of control inputs that steer the system from these states to the goal state $x_F$ due to the constraints in MPC formulation and the properties imposed on the terminal set.  However, the sparsity of $\mathcal{D}_j$ may be problematic. 
To address this issue, we can adapt the methods that detect the geometric boundary of the safe samples in $\mathcal{D}_j$ similar to the previous work for learning certificate functions from expert trajectory \cite{RobustCBF} (see Section 6.2 in \cite{RobustCBF}). Specifically, this study employs the alpha shape \cite{AlphaShape} which is a generalization of the convex hull of a set of points, designed to capture the ``shape" of the point set in a way that can handle concavities. 
With this method, we can densely sample safe states from the interior of the alpha shape boundary and unsafe states from the exterior. 

\textbf{On loss function:} Depending on the shapes of $\mathcal{X}_{\mathrm{safe}}$ and $\mathcal{X}_{\mathrm{unsafe}}$, learning the function $V^j$ that simultaneously satisfies conditions (\ref{cond:CLBF1})–(\ref{cond:CLBF7}) can be challenging. In such cases, training performance can be improved by generating trajectories within the alpha-shape boundary that steer the system states from initial conditions in $\mathcal{X}_{\mathrm{safe}}$ to the goal state $x_F$. Training can then be guided by incorporating a loss term that penalizes deviations from the cost-to-go associated with these trajectories. Such trajectories can be obtained either by solving an optimal control problem with constraints that ensure the system remains within $\mathcal{X}_{\mathrm{safe}}$ at all times or by leveraging expert demonstrations from humans.

\begin{remark}
Although learning function $V^j$ may be time-consuming, it is important to note that this learning process can be performed offline, thereby not affecting the computational cost during online execution.
\end{remark}

\section{Theoretical Properties}\label{sec:theo}
In this section, we analyze the theoretical properties of the proposed control scheme. 
We first make the following assumptions.
\begin{assumption}
    The satisfaction of the conditions (\ref{cond:CLBF1})--(\ref{cond:CLBF4}) are verified for the learned function $V^j$ for all $j\geq1$.
\end{assumption}
\begin{assumption}\label{assumption:monotonic}
    The region certified by $V^j$ as safe is monotonically enlarged along with the iteration, i.e., $\{x\in \mathcal{X}\mid V^{j-1}(x)\leq c\}\subset \{x\in \mathcal{X}\mid V^{j}(x)\leq c\}$, $\forall j\in \mathbb{N}$. 
\end{assumption}
\begin{assumption}\label{assumption:clbf1}
    All of the states in $\mathcal{D}_{j}$ is included in the certified region $\{x\in \mathcal{X}\mid V^{j}(x)\leq c\}$ for all $j\in \mathbb{N}$.
\end{assumption}
\begin{assumption}\label{assumption:adp}
   The violations of the conditions (\ref{cond:CLBF5}) and (\ref{cond:CLBF6}) for the learned function $V^j$ and $\pi_{\phi_j}$ are bounded by the function $\delta_1$ and constant $\delta_2$ for all iterations $j\geq 1$ as follows:
   \begin{align}
       &\gamma V^j(f(x,\pi_{\phi_j}(x)))-V^j(x)\notag\\
       &\qquad\qquad+\ell(x,\pi_{\phi_j}(x)) \leq \delta_1(x),\ \forall x\in \mathcal{X},\label{cond:error1} \\
       &\gamma V^j(x^{j}_{k+1})-V^j(x^{j}_k)+\ell(x^{j}_k,u^{j}_k) \geq -\delta_2,\ \forall k\in \mathbb{N}_{\geq 0}.\label{cond:error12}
   \end{align}
\end{assumption}
We note that all the above conditions can be verified offline, and additional training can be conducted to ensure their satisfaction if they are not satisfied with the current model. The systematic way for imposing the conditions is out of the scope of this paper and will be considered in future work. 
Then, we discuss the recursive feasibility of the optimization problem (\ref{problem2}), stability of equilibrium state $x_F$, and non-increasing performance cost along with the number of iterations in the following subsections. 
\subsection{Recursive feasibility}
The recursive feasibility of problem (\ref{problem2}) can be shown based on the conventional discussion in MPC literature \cite{MPC} and properties imposed on the terminal function (\ref{cond:CLBF1})--(\ref{cond:CLBF7}). 
\begin{theo}
    Consider the system (\ref{dynamics}) controlled by the LMPC controller (\ref{MPC input}). Let Assumptions \ref{assumption:stagecost}--\ref{assumption:clbf1} hold. Then, the LMPC (\ref{problem2}) is feasible for all time instances $t\geq0$ and iterations $j\geq1$. 
\end{theo}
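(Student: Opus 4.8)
The plan is to establish recursive feasibility by the standard MPC induction argument: assuming problem~\eqref{problem2} is feasible at time $t$ in iteration $j$, construct an explicit feasible candidate solution at time $t+1$. The base case is handled by Assumption~\ref{assumption:initial} together with Assumption~\ref{initialCLBF} for $j=1$ (feasibility at time $0$), and for $j\geq 2$ by the fact that $x_0^j = x_s$ is the same initial state, which remains feasible because the terminal set only enlarges (Assumption~\ref{assumption:monotonic}) — any feasible input sequence at time $0$ of iteration $j-1$ remains feasible at time $0$ of iteration $j$ since its terminal state satisfied $V^{j-2}(x_{N|0})\leq c$, hence lies in $\{V^{j-1}\leq c\}$ by monotonicity (one needs a small chaining argument here, or simply invoke Assumption~\ref{initialCLBF}-type feasibility carried forward).

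For the induction step, suppose at time $t$ we have the optimal solution \eqref{sol inputs}--\eqref{sol states} with terminal state $x^{j,*}_{t+N|t}$ satisfying $V^{j-1}(x^{j,*}_{t+N|t})\leq c$. After applying $u_t^j = u^{j,*}_{t|t}$ we observe $x_{t+1}^j = x^{j,*}_{t+1|t}$. The candidate at time $t+1$ is the shifted sequence $[u^{j,*}_{t+1|t},\ldots,u^{j,*}_{t+N-1|t}, \pi_{\phi_{j-1}}(x^{j,*}_{t+N|t})]$, i.e. reuse the tail of the previous optimal inputs and append the control policy $\pi_{\phi_{j-1}}$ evaluated at the old terminal state. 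The first $N-1$ steps of this candidate reproduce states $x^{j,*}_{t+2|t},\ldots,x^{j,*}_{t+N|t}$, which already satisfy the state and input constraints \eqref{state const}--\eqref{input const}. The only new element is the last predicted state $x^{+} := f(x^{j,*}_{t+N|t}, \pi_{\phi_{j-1}}(x^{j,*}_{t+N|t}))$, and I must verify (i) $x^{+}\in\mathcal{X}\backslash\mathcal{A}$ and (ii) the new terminal constraint $V^{j-1}(x^{+})\leq c$. Both follow from condition~\eqref{cond:CLBF5}: since $x^{j,*}_{t+N|t}\in\{V^{j-1}\leq c\}\subseteq\mathcal{X}_{\mathrm{safe}}$, we get $V^{j-1}(x^{+}) \leq V^{j-1}(f(x^{j,*}_{t+N|t},\pi_{\phi_{j-1}}(x^{j,*}_{t+N|t}))) \leq V^{j-1}(x^{j,*}_{t+N|t}) \leq c$, so $x^{+}$ lies in the certified region; and because $\mathcal{X}_{\mathrm{unsafe}}$ is a superset of $\mathcal{A}$ while \eqref{cond:CLBF3}--\eqref{cond:CLBF4} guarantee $\{V^{j-1}\leq c\}\cap\mathcal{X}_{\mathrm{unsafe}}=\emptyset$, we conclude $x^{+}\notin\mathcal{A}$, and $x^{+}\in\mathcal{X}$ since the certified region is contained in $\mathcal{X}$ by construction. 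This closes the induction on $t$; combined with the base case, feasibility holds for all $t\geq 0$ and all $j\geq 1$.

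The main obstacle — really the only delicate point — is justifying that the control policy $\pi_{\phi_{j-1}}$ keeps the terminal state inside the certified region, i.e. that the terminal set $\{V^{j-1}\leq c\}$ is genuinely control-invariant under $\pi_{\phi_{j-1}}$. This is exactly what condition~\eqref{cond:CLBF5} encodes (with the infimum attained or approximated by $\pi_{\phi_{j}}$), but one must be careful that \eqref{cond:CLBF5} is assumed to hold on all of $\mathcal{X}_{\mathrm{safe}}$ and that $\{V^{j-1}\leq c\}\subseteq\mathcal{X}_{\mathrm{safe}}$ — which is precisely the content of how $\mathcal{X}_{\mathrm{safe}}$ is defined in Section~\ref{subsec:terminalcost}. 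A secondary subtlety is that across iterations the terminal function changes from $V^{j-2}$ to $V^{j-1}$ inside problem~\eqref{problem2}; here Assumption~\ref{assumption:monotonic} (monotone enlargement of the certified region) is what guarantees that a solution feasible with the old terminal set is still feasible with the new one, so the iteration-to-iteration transition does not break feasibility. I would note in passing that Assumption~\ref{assumption:adp}, which only bounds the \emph{violation} of \eqref{cond:CLBF5}--\eqref{cond:CLBF6}, is not needed for this particular theorem — exact satisfaction of \eqref{cond:CLBF5} on the certified region (guaranteed by the verification step, Assumption preceding the theorem) suffices — though the bounded-violation version would reappear in the stability and cost-monotonicity analyses.
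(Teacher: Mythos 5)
Your proposal is correct and follows essentially the same argument as the paper: induction on $t$ with the shifted input sequence appended by $\pi_{\phi_{j-1}}(x^{j,*}_{t+N|t})$, using forward invariance of $\{V^{j-1}\leq c\}$ from condition \eqref{cond:CLBF5} (together with \eqref{cond:CLBF4} to place the sublevel set inside $\mathcal{X}_{\mathrm{safe}}$ and outside $\mathcal{A}$). Your additional care with the cross-iteration base case via Assumption \ref{assumption:monotonic} and the explicit check that the recycled terminal state satisfies the path constraints only makes explicit what the paper's proof leaves implicit.
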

\begin{proof}
    We show the theorem with mathematical induction.
    First, from Assumption \ref{initialCLBF}, there exists a feasible solution of (\ref{problem2}) at $t=0$ in iteration $j$. At time $t>0$ of iteration $j$, we suppose that the problem (\ref{problem2}) is solved and the optimal solution $u_{t:t+N-1|t}^{j,*}$ and the corresponding state trajectory $x_{t+1:t+N|t}^{j,*}$ are obtained. Then, the first input $u_{t|t}^{j,*}$ is applied to the system (\ref{dynamics}), and the next state is obtained as $x_{t+1}^j=x_{t+1|t}^{j,*}$. Then, at time $t+1$ of the iteration $j$, the control input sequence $[u_{t+1}^{j,*},u_{t+2}^{j,*},\ldots, u_{t+N-1}^{j,*}, \pi_{\phi_{j-1}}(x_{t+N}^{j,*})]$ is a feasible solution of (\ref{problem2}) at time $t+1$ since $V^{j-1}(x_{t+N|t}^{j,*})\leq c$ and the set $\{x\mid V^{j-1}(x)\leq c\}$ is forward invariant under the control policy $\pi_{\phi_{j-1}}$ due to the conditions (\ref{cond:CLBF5}). This discussion holds for all $j\geq 1$.
    Thus, by mathematical induction, we can conclude that the optimization problem (\ref{problem2}) is feasible for all $t\geq 0$ and $j\geq 1$.
\end{proof}

\subsection{Stability}
Next, we discuss the stability property of the proposed method in the following theorem. 
\begin{theo}\label{theo:stability}
 Consider the system (\ref{dynamics}) controlled by the LMPC controller (\ref{MPC input}).
Let Assumptions \ref{assumption:stagecost}--\ref{assumption:adp} hold. Moreover, we assume that the error bound $\delta_1(\cdot)$ satisfies $\delta_1(x_{t+N|t}^{*,j})<\gamma^{-N} \ell(x_{t|t}^{*,j},u_{t|t}^{*,j})$, $\forall t\geq0$.
    Then, the equilibrium point $x_F$ is asymptotically stable for the closed-loop system (\ref{dynamics}) and (\ref{MPC input}).
\end{theo}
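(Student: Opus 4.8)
The plan is to establish asymptotic stability of $x_F$ by exhibiting the optimal value function $J^{\mathrm{LMPC},j}_{t\rightarrow t+N}(\cdot)$ (suitably rescaled) as a Lyapunov function for the closed-loop system, following the standard MPC stability argument but accounting for the bounded violations of the certificate conditions quantified in Assumption~\ref{assumption:adp}. First I would argue positive definiteness: by Assumption~\ref{assumption:stagecost} the stage cost $\ell$ is positive definite with respect to $(x_F,0)$, and by \eqref{cond:CLBF1}--\eqref{cond:CLBF2} the terminal cost $V^{j-1}$ is positive definite in $x$, so the optimal cost $J^{\mathrm{LMPC},j}_{t\rightarrow t+N}(x_t^j)$ is zero at $x_F$ and strictly positive elsewhere in a neighborhood; continuity of $f$, $\ell$, and $V^{j-1}$ gives an upper bound of class $\mathcal{K}_\infty$ near $x_F$, yielding the sandwich $\alpha_1(\|x-x_F\|)\le \tilde J(x)\le \alpha_2(\|x-x_F\|)$ for the appropriately time-normalized value function $\tilde J$ (dividing out the $\gamma^t$ prefactor that appears in \eqref{prob1:cost}).

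Next I would carry out the descent step. Let $u^{j,*}_{t:t+N-1|t}$ and $x^{j,*}_{t+1:t+N|t}$ be the optimizer at time $t$. As in the proof of recursive feasibility, the shifted sequence $[u^{j,*}_{t+1|t},\ldots,u^{j,*}_{t+N-1|t},\pi_{\phi_{j-1}}(x^{j,*}_{t+N|t})]$ is feasible at time $t+1$, and plugging it into the cost gives
\begin{align}
J^{\mathrm{LMPC},j}_{t+1\rightarrow t+1+N}(x_{t+1}^j) &\le J^{\mathrm{LMPC},j}_{t\rightarrow t+N}(x_t^j) - \gamma^t \ell(x^{j,*}_{t|t},u^{j,*}_{t|t}) \notag\\
&\quad + \gamma^{t+N}\Big[\gamma V^{j-1}(f(x^{j,*}_{t+N|t},\pi_{\phi_{j-1}}(x^{j,*}_{t+N|t}))) - V^{j-1}(x^{j,*}_{t+N|t}) + \ell(x^{j,*}_{t+N|t},\pi_{\phi_{j-1}}(x^{j,*}_{t+N|t}))\Big].
\end{align}
By \eqref{cond:error1} the bracketed term is bounded above by $\delta_1(x^{j,*}_{t+N|t})$, so the one-step decrease is at most $-\gamma^t\ell(x^{j,*}_{t|t},u^{j,*}_{t|t}) + \gamma^{t+N}\delta_1(x^{j,*}_{t+N|t})$. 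The hypothesis $\delta_1(x^{*,j}_{t+N|t})<\gamma^{-N}\ell(x^{*,j}_{t|t},u^{*,j}_{t|t})$ makes this strictly negative whenever $\ell(x^{*,j}_{t|t},u^{*,j}_{t|t})>0$, i.e. whenever $x_t^j\neq x_F$ (using Assumption~\ref{assumption:stagecost} and, if needed, observability/detectability so that $x_t^j\neq x_F$ forces a nonzero predicted stage cost somewhere along the horizon — this may require a brief argument that $\ell(x^{*,j}_{t|t},u^{*,j}_{t|t})=0$ only at $x_F$, or else summing the inequality over a window). After normalizing by $\gamma^t$ this yields $\tilde J(x_{t+1}^j)-\tilde J(x_t^j)\le -\alpha_3(\|x_t^j-x_F\|)$ for some class-$\mathcal{K}$ function, and invoking the standard discrete-time Lyapunov theorem gives asymptotic stability; recursive feasibility (already proved) guarantees the trajectory stays in the domain where all of this is valid.

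The main obstacle I anticipate is making the strict-decrease step fully rigorous: the clean inequality gives decrease proportional to $\ell(x^{*,j}_{t|t},u^{*,j}_{t|t})$, but bounding this below by a $\mathcal{K}$-function of $\|x_t^j-x_F\|$ is not automatic, since the optimizer could in principle choose $u^{*,j}_{t|t}$ making the first stage cost small while incurring cost later in the horizon — the honest fix is either to assume a lower bound $\ell(x,u)\ge \alpha_\ell(\|x-x_F\|)$ on the state part of the stage cost (common in the MPC literature and implicit in Assumption~\ref{assumption:stagecost}'s positive-definiteness when $\ell$ is a function of $x$ alone near $x_F$), or to work with the sum of stage costs over the horizon. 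A secondary technical point is that the value function and hence $\tilde J$ changes with the iteration index $j$ through $V^{j-1}$; since $j$ is fixed during a single run and the bounds in Assumptions~\ref{assumption:adp} are uniform in $j$, the Lyapunov argument goes through iteration-by-iteration with uniform $\mathcal{K}$-bounds, so this causes no real difficulty but should be stated explicitly. I would close by remarking that the terminal constraint \eqref{terminal} together with forward invariance of $\{V^{j-1}\le c\}$ keeps the candidate feasible for all time, so the Lyapunov decrease holds globally on the feasible set, not merely locally.
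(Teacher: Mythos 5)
Your proposal is correct and follows essentially the same route as the paper's own proof: use the optimal value function as a Lyapunov function, append the terminal policy $\pi_{\phi_{j-1}}$ to the shifted optimizer as a feasible candidate, bound the terminal-cost change via \eqref{cond:error1}, and invoke the hypothesis $\delta_1(x_{t+N|t}^{*,j})<\gamma^{-N}\ell(x_{t|t}^{*,j},u_{t|t}^{*,j})$ for strict decrease. Your additional care about the $\gamma^t$ normalization, the class-$\mathcal{K}$ sandwich bounds, and lower-bounding $\ell(x_{t|t}^{*,j},u_{t|t}^{*,j})$ away from zero addresses points the paper's terser proof glosses over, so it is a refinement rather than a different argument.
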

\begin{proof}
    Since the problem (\ref{problem2}) is time-invariant, we replace $J_{t\rightarrow t+N}^{\mathrm{LMPC},j}(\cdot)$ with $J_{0\rightarrow N}^{\mathrm{LMPC},j}(\cdot)$ for conciseness of the notation. Suppose we have optimal solution of (\ref{problem2}) at time $t$ as $u_{t:t+N-1|t}^{j,*}$ and $x_{t+1:t+N|t}^{j,*}$. Then, the optimal cost satisfies the following:
    \begin{align}\label{proofstability}
        &J_{t\rightarrow t+N}^{\mathrm{LMPC},j}(x_t^j)=\min_{u^j_{t|t},\ldots, u^j_{t+N-1|t}}\left[ \sum_{k=t}^{t+N-1}\gamma^k \ell(x^j_{k|t},u^j_{k|t}) \right. \notag \\
        & \qquad \qquad \qquad \left. +\gamma^{t+N} V^{j-1}(x^j_{N|t}) \right]\notag \\
        & = \gamma^t \ell(x_{t|t}^{*,j},u_{t|t}^{*,j}) + \sum_{k=t+1}^{t+N-1}\gamma^k\ell(x_{t+k|t}^{*,j},u_{t+k|t}^{*,j})\notag \\
        &\qquad \qquad +\gamma^{t+N}V^{j-1}(x_{t+N|t}^{*,j})\notag \\
        &\geq  \gamma^t\ell(x_{t|t}^{*,j},u_{t|t}^{*,j}) + \sum_{k=t+1}^{t+N-1}\gamma^k\ell(x_{t+k|t}^{*,j},u_{t+k|t}^{*,j})\notag \\
        &\qquad+\gamma^{t+N}\ell(x_{t+N|t}^{*,j},\pi_{\phi_{j-1}}(x_{t+N|t}^{*,j}))-\gamma^{t+N}\delta_1(x_{t+N|t}^{*,j}) \notag \\
        &\qquad \qquad +\gamma^{t+N+1}V^{j-1}(f(x_{t+N|t}^{*,j},\pi_{\phi_{j-1}} (x_{t+N|t}^{*,j})))\notag \\
        &\geq \gamma^t \ell(x_{t|t}^{*,j},u_{t|t}^{*,j})+J_{t\rightarrow t+N}^{\mathrm{LMPC},j}(x_{t+1|t}^{*,j})-\gamma^{t+N}\delta_1(x_{t+N|t}^{*,j}),
    \end{align}
    where we used (\ref{cond:error1}) in the first inequality.
    From (\ref{proofstability}) and the condition $\delta_1(x_{t+N|t}^{*,j})<\gamma^{-N} \ell(x_{t|t}^{*,j},u_{t|t}^{*,j})$, $\forall t\geq0$, we can conclude that the function $J_{0\rightarrow N}^{\mathrm{LMPC},j}(\cdot)$ is a decreasing Lyapunov function for the closed system (\ref{dynamics}) and (\ref{MPC input}). Thus, the final state $x_F$ is asymptotically stable.
\end{proof}
\subsection{Performance cost}\label{subsec:performance}
Lastly, we discuss the property of the performance cost along with the number of iterations. We first define the performance cost at iteration $j$ as follows:
\begin{align}
    J_{0\rightarrow \infty}^j(x_s) = \sum_{t=0}^{\infty}\gamma^t\ell(x_t^j,u_t^j).
\end{align}
Then, we have the following theorem regarding performance cost.
\begin{theo}\label{theo:performance}
    Consider the closed loop system (\ref{dynamics}) and (\ref{MPC input}). Let Assumptions \ref{assumption:stagecost}--\ref{assumption:adp} hold. Then, the following holds:
    \begin{align}
        J_{0\rightarrow \infty}^{j-1}(x_s)\geq J_{0\rightarrow \infty}^j(x_s)-\frac{\gamma^{N}(\delta_1^{j,\mathrm{max}}+\delta_2)}{1-\gamma}.
    \end{align}
    where $\delta_1^{j,\mathrm{max}}=\max_{t}\delta_1(x_{t+N|t}^{*,j})$,
\end{theo}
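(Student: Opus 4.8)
The plan is to establish the equivalent inequality $J_{0\to\infty}^{j}(x_s)\le J_{0\to\infty}^{j-1}(x_s)+\gamma^{N}(\delta_1^{j,\mathrm{max}}+\delta_2)/(1-\gamma)$ by sandwiching the iteration-$j$ closed-loop cost between two bounds: (i) it is at most the iteration-$j$ MPC optimal cost $J_{0\to N}^{\mathrm{LMPC},j}(x_s)$ plus a $\delta_1$-term, and (ii) that MPC optimal cost is at most the iteration-$(j-1)$ closed-loop cost plus a $\delta_2$-term. Step (ii) is the classical reference-free iterative-learning-MPC idea of using the previous iteration's trajectory as a feasible warm start, here adapted to the discounted cost and to the approximate neural certificate.

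For step (i), I would reuse the descent chain already derived in the proof of Theorem~\ref{theo:stability}: exploiting time-invariance and writing $J_{0\to N}^{\mathrm{LMPC},j}(\cdot)$ for the normalized value, \req{proofstability} together with $x_{t+1|t}^{*,j}=x_{t+1}^{j}$ and $\ell(x_{t|t}^{*,j},u_{t|t}^{*,j})=\ell(x_t^j,u_t^j)$ gives, for every $t\ge 0$,
\[
\ell(x_t^j,u_t^j)\ \le\ J_{0\to N}^{\mathrm{LMPC},j}(x_t^j)-J_{0\to N}^{\mathrm{LMPC},j}(x_{t+1}^j)+\gamma^{N}\delta_1^{j,\mathrm{max}}.
\]
Multiplying by $\gamma^{t}\ge 0$, summing over $t=0,\dots,T$, and using $J_{0\to N}^{\mathrm{LMPC},j}\ge 0$ (so that $\sum_{t=0}^{T}\gamma^{t}[J(x_t^j)-J(x_{t+1}^j)]\le J(x_0^j)$ because $\gamma^{t}-\gamma^{t-1}\le 0$) together with $\sum_{t\ge 0}\gamma^{t+N}=\gamma^{N}/(1-\gamma)$, then letting $T\to\infty$, yields $J_{0\to\infty}^{j}(x_s)\le J_{0\to N}^{\mathrm{LMPC},j}(x_s)+\gamma^{N}\delta_1^{j,\mathrm{max}}/(1-\gamma)$.

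For step (ii), consider the candidate input sequence $[u_0^{j-1},\dots,u_{N-1}^{j-1}]$ for \req{problem2} at time $0$ of iteration $j$; it produces the states $[x_0^{j-1},\dots,x_N^{j-1}]$ and is feasible: constraints \req{state const} and \req{input const} hold since the iteration-$(j-1)$ closed loop is admissible, \req{const init} holds since $x_0^{j-1}=x_s$ by Assumption~\ref{assumption:initial}, and the terminal constraint \req{terminal} holds because $x_N^{j-1}\in\mathcal{D}_{j-1}\subseteq\{x:V^{j-1}(x)\le c\}$ by Assumption~\ref{assumption:clbf1}. Hence $J_{0\to N}^{\mathrm{LMPC},j}(x_s)\le\sum_{k=0}^{N-1}\gamma^{k}\ell(x_k^{j-1},u_k^{j-1})+\gamma^{N}V^{j-1}(x_N^{j-1})$. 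It remains to bound $\gamma^{N}V^{j-1}(x_N^{j-1})$ by the cost-to-go: rearranging the relaxed descent inequality \req{cond:error12} of Assumption~\ref{assumption:adp} along the iteration-$(j-1)$ trajectory gives $V^{j-1}(x_k^{j-1})\le\ell(x_k^{j-1},u_k^{j-1})+\gamma V^{j-1}(x_{k+1}^{j-1})+\delta_2$; iterating this from $k=N$ and using $\gamma^{m}V^{j-1}(x_{N+m}^{j-1})\to 0$ (from $x_k^{j-1}\to x_F$ with $V^{j-1}$ continuous and $V^{j-1}(x_F)=0$, or boundedness of $V^{j-1}$ on $\mathcal{X}$) gives $\gamma^{N}V^{j-1}(x_N^{j-1})\le\sum_{k\ge N}\gamma^{k}\ell(x_k^{j-1},u_k^{j-1})+\gamma^{N}\delta_2/(1-\gamma)$. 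Substituting, $J_{0\to N}^{\mathrm{LMPC},j}(x_s)\le J_{0\to\infty}^{j-1}(x_s)+\gamma^{N}\delta_2/(1-\gamma)$, and chaining with step (i) yields the stated bound.

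I expect the main obstacle to be step (ii): constructing the warm-start candidate and, above all, certifying its terminal-constraint feasibility requires the monotonicity/containment bookkeeping (Assumptions~\ref{assumption:monotonic}–\ref{assumption:clbf1}) and the fact that the iteration-$j$ MPC uses the \emph{previous} certificate $V^{j-1}$; the rest is a careful discounted telescoping, where the only delicate point is handling the tail terms $\gamma^{m}V^{j-1}(x_{N+m}^{j-1})$ and ensuring the infinite sums converge, which is where convergence of the closed loop to $x_F$ (or compactness of $\mathcal{X}$) is invoked.
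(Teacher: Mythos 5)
Your proof is correct and follows the same sandwich decomposition as the paper: both arguments bound $J^{j}_{0\to\infty}(x_s)$ from above by $J^{\mathrm{LMPC},j}_{0\to N}(x_s)$ plus a $\delta_1$ term, and bound $J^{\mathrm{LMPC},j}_{0\to N}(x_s)$ from above by $J^{j-1}_{0\to\infty}(x_s)$ plus a $\delta_2$ term. Your step (ii) --- warm-starting problem \req{problem2} with the iteration-$(j-1)$ trajectory, whose terminal feasibility follows from Assumption~\ref{assumption:clbf1}, and then trading $\gamma^N V^{j-1}(x_N^{j-1})$ for the tail cost-to-go via iterated use of \req{cond:error12} --- is exactly the paper's chain \req{eq:perf1}--\req{eq:performance3}. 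The one place you genuinely diverge is how the tail term is disposed of in step (i): the paper unrolls \req{proofstability} to infinity and needs $\lim_{t\to\infty}J^{\mathrm{LMPC},j}_{t\to t+N}(x_t^j)=0$, which it obtains by invoking Theorem~\ref{theo:stability}, whose extra hypothesis $\delta_1(x^{*,j}_{t+N|t})<\gamma^{-N}\ell(x^{*,j}_{t|t},u^{*,j}_{t|t})$ is not among the stated assumptions of Theorem~\ref{theo:performance}; you instead discard the tail using only nonnegativity of the value function together with the discounting (an Abel-summation estimate), so your step (i) needs neither asymptotic stability nor that extra condition --- a small but real gain in self-containedness. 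One cosmetic slip: dividing \req{proofstability} by $\gamma^t$ actually yields $\ell(x_t^j,u_t^j)\le J^{\mathrm{LMPC},j}_{0\to N}(x_t^j)-\gamma\, J^{\mathrm{LMPC},j}_{0\to N}(x_{t+1}^j)+\gamma^N\delta_1^{j,\mathrm{max}}$, with a factor $\gamma$ on the second term that you dropped; restoring it only makes your telescoping exact, so the conclusion is unaffected.
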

\begin{proof}
    By recursively applying the condition (\ref{cond:error12}) in Assumption (\ref{assumption:adp}), the following hold for all $j\geq 1$:
    \begin{align}\label{eq:perf1}
        &V^{j-1}(x_0^{j-1})\leq \gamma V^{j-1}(x_1^{j-1}) + \ell(x_0^{j-1},u_0^{j-1})+ \delta_2\notag \\
        &\leq \gamma^N V^{j-1}(x_N^{j-1}) + \sum_{t=0}^{N-1}\gamma^t\ell(x_t^{j-1},u_t^{j-1})+ \delta_2\sum_{t=0}^{N-1}\gamma^t\notag \\
        &\leq \gamma^\infty V^{j-1}(x_\infty^{j-1}) + \sum_{t=0}^{\infty}\gamma^t\ell(x_t^{j-1},u_t^{j-1})+ \delta_2\sum_{t=0}^{\infty}\gamma^t.\notag \\
    \end{align}
    Then, from the third inequality of (\ref{eq:perf1}) and $\gamma^\infty V^{j-1}(x_\infty^{j-1})=0$, we have 
    \begin{align}\label{eq:performance3}
        &J_{0\rightarrow \infty}^{j-1}(x_s) = \sum_{t=0}^\infty \gamma^t \ell(x_t^{j-1},u_t^{j-1})\notag \\
        &\geq \sum_{t=0}^{N-1} \gamma^t \ell(x_t^{j-1},u_t^{j-1})+\gamma^N V^{j-1}(x_N^{j-1})-\delta_2\sum_{t=N}^{\infty}\gamma^t\notag \\
        &\geq \min_{u^{j-1}_0,\ldots,u^{j-1}_{N-1}}\left[ \sum_{k=0}^{N-1}\gamma^k\ell(x^{j-1}_k,u^{j-1}_k)+\gamma^NV^{j-1}(x_N^{j-1}) \right] \notag \\
        &\qquad-\frac{\delta_2\gamma^{N}}{1-\gamma}= J_{0\rightarrow N}^{\mathrm{LMPC},j}(x_0^{j})-\frac{\delta_2\gamma^{N}}{1-\gamma},
    \end{align}
    Moreover, from (\ref{proofstability}), we have the following
    \begin{align}\label{eq:performance1}
        &J_{0\rightarrow N}^{\mathrm{LMPC},j}(x_0^j)\geq \ell(x_0^j,u_0^j)+ J_{1\rightarrow N+1}^{\mathrm{LMPC},j}(x_1^j)-\gamma^N\delta_1^{j,\mathrm{max}}\notag \\
        & \geq \ell(x_0^j,u_0^j)+ \gamma \ell(x_1^j,u_1^j)+J_{2\rightarrow N+2}^{\mathrm{LMPC},j}(x_2^j)-\gamma^N\delta_1^{j,\mathrm{max}}\notag \\
        &\qquad-\gamma^{N+1}\delta_1^{j,\mathrm{max}} \notag\\
        &\geq \lim_{t\rightarrow \infty}\left[ \sum_{k=0}^{t-1}\gamma^k\ell(x_k^j,u_k^j)+J_{t\rightarrow t+N}^{\mathrm{LMPC},j}(x_t^j) \right]-\frac{\gamma^N\delta_1^{j,\mathrm{max}}}{1-\gamma}.
    \end{align}
    From Theorem \ref{theo:stability}, $x_F$ is asymptotically stable for the closed-loop system (\ref{dynamics}) and (\ref{MPC input}). Thus, by continuity of the function $\ell$ and Assumption \ref{assumption:stagecost}, we have 
    \begin{align}\label{eq:performance2}
        \lim_{t\rightarrow \infty} J_{t\rightarrow t+N}^{\mathrm{LMPC},j}(x_t^j) = 0.
    \end{align}
    From (\ref{eq:performance1}) and (\ref{eq:performance2}), we obtain 
    \begin{align}\label{eq:performance4}
        J_{0\rightarrow N}^{\mathrm{LMPC},j}(x_t^j)\geq  J_{0\rightarrow \infty}^j(x_s)-\frac{\gamma^N\delta_1^{j,\mathrm{max}}}{1-\gamma}.
    \end{align}
    Thus from (\ref{eq:performance3}) and (\ref{eq:performance4}), the following inequality holds:
    \begin{align}
        J_{0\rightarrow \infty}^{j-1}(x_s)&\geq J_{0\rightarrow N}^{\mathrm{LMPC},j}(x_0^j)-\frac{\delta_2\gamma^{N}}{1-\gamma}\notag \\
        &\geq J_{0\rightarrow \infty}^j(x_s)-\frac{\gamma^{N}(\delta_1^{j,\mathrm{max}}+\delta_2)}{1-\gamma}.
    \end{align}
    This proves the theorem.
\end{proof}
Theorem \ref{theo:performance} indicates that the performance cost is guaranteed to be non-increasing with respect to the number of iterations when $\delta_1^{j,\mathrm{max}}$ and $\delta_2$ (violations of the conditions (\ref{cond:CLBF5}) and (\ref{cond:CLBF6})) are zero and the upper bound of $J_{0\rightarrow \infty}^{j}(x_s)-J_{0\rightarrow \infty}^{j-1}(x_s)$ grows linearly with the maximum errors $\delta_1^{j,\mathrm{max}}$ and $\delta_2$.

\section{Simulation}\label{sec:sim}
In this section, we evaluate the proposed control scheme through numerical experiments. All experiments are conducted using Python on a Windows 11 machine with a 2.20 GHz Core i9 CPU and 32 GB of RAM. The neural networks representing the terminal components and associated control policy are implemented and trained using PyTorch. The MPC optimization problems are solved using the IPOPT solver in CasADi \cite{casadi}.
We compare the control performance and time efficiency of the proposed method with the previous reference-free iterative learning MPC method \cite{iterative1}. The method \cite{iterative1} is implemented based on the open-source repository \cite{LMPCcode}.

\begin{figure*}[tb]
 \begin{center}
  \includegraphics[width=1\hsize]{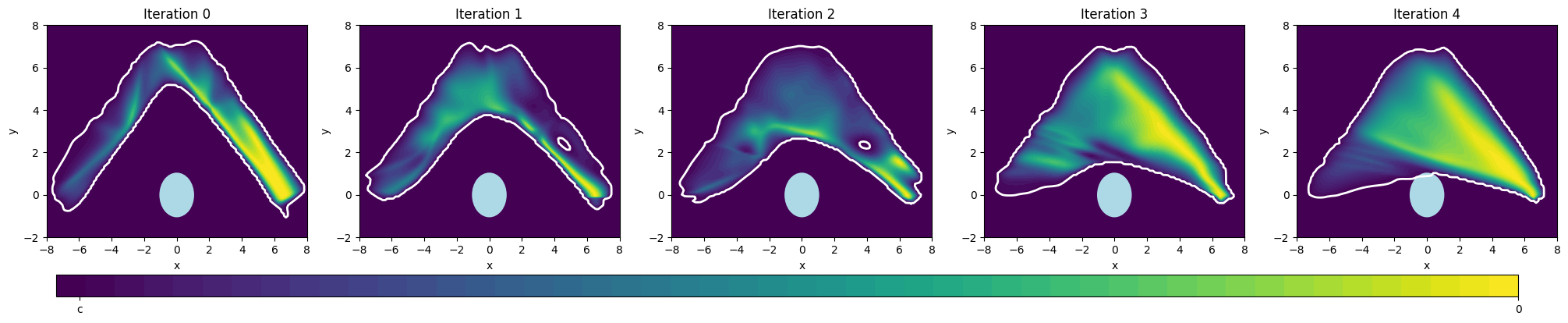}
 \end{center}
 \caption{The 2D heat map of the function $V^j$ after each iteration when the heading angle is fixed to $\theta=-\pi/4$. The white line shows the $c$ level set and light blue region shows the obstacle. }
 \label{fig:contour}
\end{figure*}

\begin{figure}[tb]
 \begin{center}
  \begin{minipage}{0.47\linewidth}
   \centering
   \includegraphics[width=\linewidth]{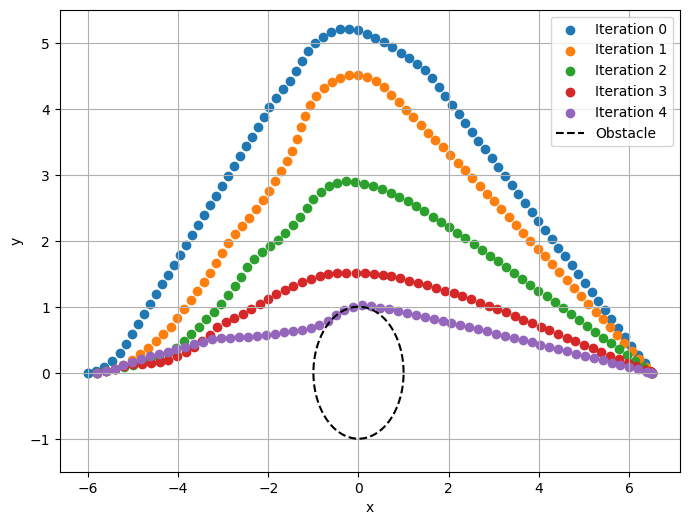}
   \caption*{(a) Proposed method}
  \end{minipage}
  \hfill
  \begin{minipage}{0.47\linewidth}
   \centering
   \includegraphics[width=\linewidth]{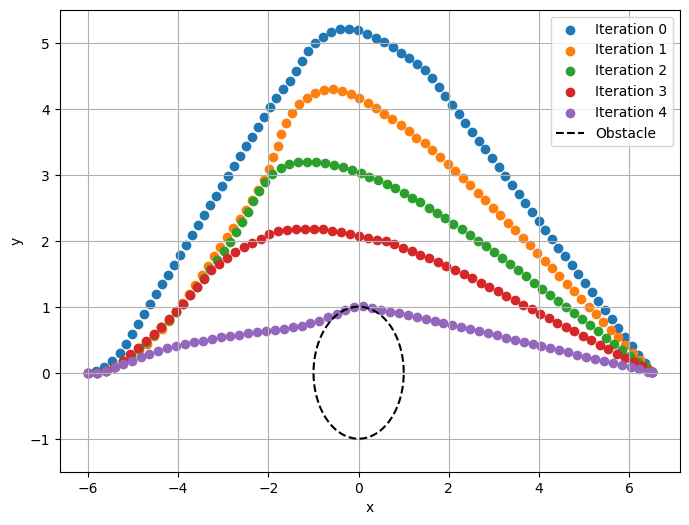}
   \caption*{(b) Previous method}
  \end{minipage}
  \caption{The actual trajectories of the vehicle position (nominal experiment).}
  \label{fig:trajectory_comparison}
 \end{center}
\end{figure}

\begin{table}[tb]
    \centering
    \caption{Total cost and computation time at each iteration (nominal experiment)}
    \begin{tabular}{llccccc}
        \toprule
        \multirow{2}{*}{\textbf{Method}} & & \multicolumn{5}{c}{\textbf{Iteration}} \\
        \cmidrule(lr){3-7}
                       &      & 0    & 1    & 2    & 3    & 4      \\
        \midrule
        \multirow{2}{*}{\textbf{Proposed}} 
                       & Cost & 5.51 & 4.09 & 3.36 & 2.99 & 2.88 \\
                       & Time [s]& --   & 2.7 & 2.5 & 2.6 & 1.8  \\
        \midrule
        \multirow{2}{*}{\textbf{Previous}} 
                       & Cost & 5.51 & 4.21 & 3.76 & 3.40 & 3.02 \\
                       & Time [s]& --   & 96 & 207 & 379 & 398 \\
        \bottomrule
    \end{tabular}
    \label{tab:cost_computation_comparison}
\end{table}

\begin{table}[tb]
    \centering
    \caption{Total cost at each iteration (TurtleBot simulation)}
    \begin{tabular}{lccccc}
        \toprule
        \multirow{2}{*}{\textbf{Method}} & \multicolumn{5}{c}{\textbf{Iteration}} \\
        \cmidrule(lr){2-6}
                       & 0    & 1    & 2    & 3    & 4       \\
        \midrule
        \textbf{Proposed}& 5.51 & 3.93 & 3.35  & 3.08  & 2.97 \\
                      
        \midrule
        \textbf{Previous}& 5.51 & 4.20 & 3.69 & 3.32  & 3.01  \\
                       
        \bottomrule
    \end{tabular}
    \label{tab:cost_computation_comparison2}
\end{table}

In this experiment, we consider the Dubins car reach-avoid problem. The vehicle dynamics are modeled as  
\begin{align}\label{eq:dubin}
    x_{k+1} = \begin{bmatrix} z_{k+1} \\ y_{k+1} \\ \theta_{k+1} \end{bmatrix} = \begin{bmatrix} z_k \\ y_k \\ \theta_k \end{bmatrix} +\Delta t \begin{bmatrix} v_k \cos \theta_k \\ v_k \sin \theta_k \\ \omega_k \end{bmatrix},
\end{align}  
where the state vector $ x_k = [z_k, y_k, \theta_k]^\top $ represents the vehicle's position $(z_k, y_k)$ and orientation $ \theta_k $. The control inputs $ u_k = [v_k, \omega_k]^\top $ consist of the velocity $ v_k $ and angular velocity $ \omega_k $. The time step is set to $ \Delta t = 0.1 $.  
The objective is to minimize the cumulative cost $ \sum_{k=1}^{\infty} 0.001\| x_k - x_F \|^2 $ while avoiding the obstacle region defined by $ (z_k - z_{\text{obs}})^2 + (y_k - y_{\text{obs}})^2 \leq 1 $ and adhering to the control constraints $ 0 \leq v_k \leq 2 $ and $ -\pi/2 \leq \omega_k \leq \pi/2 $. The goal state and initial state are set as $ x_F = [6,0,0] $ and $ x_s = [-6,0,0] $, respectively.  
The parameters used in Algorithms \ref{alg} and \ref{alg:training_CLBF_compact} are configured as $ N_{\mathrm{ite}} = 5 $, $ N = 15 $, $ \gamma = 0.8 $, $ c = 7 $, $ a_1 = a_2 = a_3 = a_4 = a_5 = 1 $, and $ k_{\mathrm{val}} = 100 $. The neural networks used to represent the function $ V^j $ and the control policy are constructed by fully connected networks with architectures 2-32-32-1 and 2-16-16-1, respectively, both utilizing the $ \tanh $ activation function. The learning rate for training the neural networks is set to $ 10^{-3}$.  

To evaluate the performance of the proposed method in both ideal and practical scenarios, we conduct experiments in two settings for the control problem explained above:  
(I) a nominal setting where both the controlled system and the prediction model used in the controller are given by (\ref{eq:dubin}), and  
(II) a more realistic setting where the controlled system is a TurtleBot simulated in PyBullet \cite{pybullet}, while the prediction model remains the same as (\ref{eq:dubin}). The main objectives of the experiments are threefold: (i) to verify that the proposed method can iteratively improve control performance, (ii) to demonstrate that it requires less computation during online control execution compared to the previous method \cite{iterative1}, and (iii) to show its effectiveness in a more realistic setting using the PyBullet simulation.
 For the proposed method, we learn the initial function $V^j$ by samples taken from ``behind" the initial trajectory (see the leftmost figure in Fig. \ref{fig:contour}).

\subsection{The result of nominal experiment}
We first evaluate the control performance and computational efficiency during online execution for the case (I), in comparison with the previous method \cite{iterative1}. Table \ref{tab:cost_computation_comparison} summarizes the results obtained by the proposed method and the previous approach. The proposed method demonstrates an iterative reduction in the performance cost with the number of iterations, achieving comparable control performance to the previous method. The main advantage of the proposed method can be seen in the time required for the online computation. The proposed method is much faster than that of the comparison method, which enables real time implementations for much broader applications.
In Fig \ref{fig:trajectory_comparison}, the state trajectories obtained by the proposed method and the comparison method \cite{iterative1} at iteration 1-5 are plotted. 
We also show the 2D heat map of the function $V^j$ after each iteration when the heading angle is fixed to $\theta=-\pi/4$ in Fig. \ref{fig:contour}. We can see from Fig. \ref{fig:contour} that the certified region is progressively enlarged by the proposed scheme.
\subsection{The result of TurtleBot simulation}
\begin{figure}[tb]
 \begin{center}
  \begin{minipage}{0.47\linewidth}
   \centering
   \includegraphics[width=\linewidth]{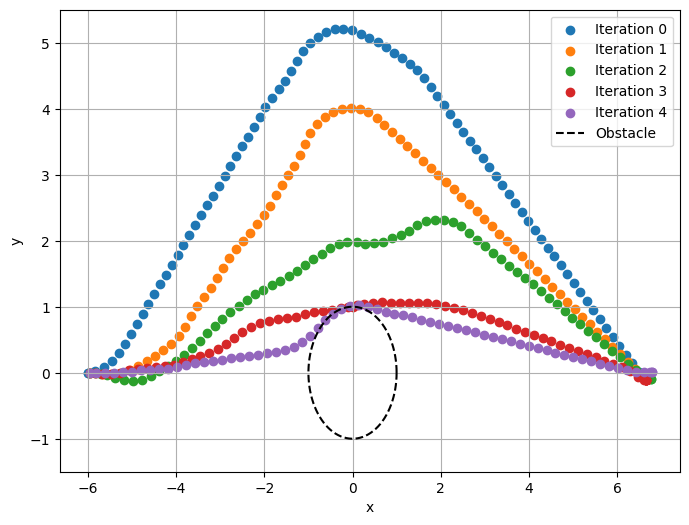}
   \caption*{(a) Proposed method}
  \end{minipage}
  \hfill
  \begin{minipage}{0.47\linewidth}
   \centering
   \includegraphics[width=\linewidth]{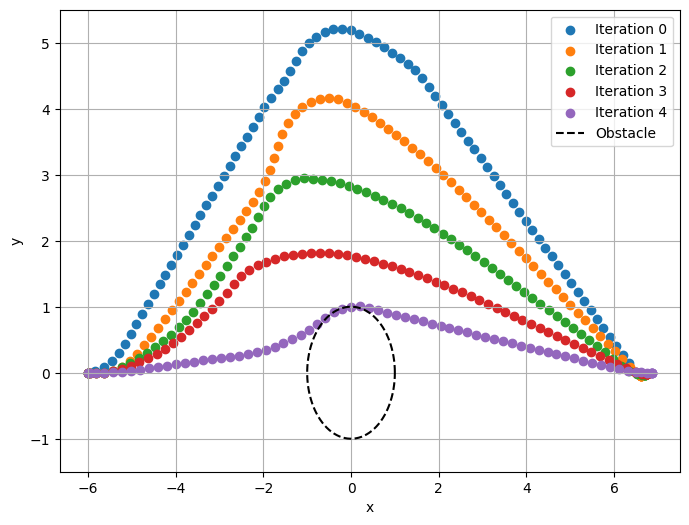}
   \caption*{(b) Previous method}
  \end{minipage}
  \caption{The actual trajectories of the vehicle position (TurtleBot simulation).}
  \label{fig:trajectory_comparison2}
 \end{center}
\end{figure}

\begin{figure}[tb]
 \begin{center}
  \includegraphics[width=1\hsize]{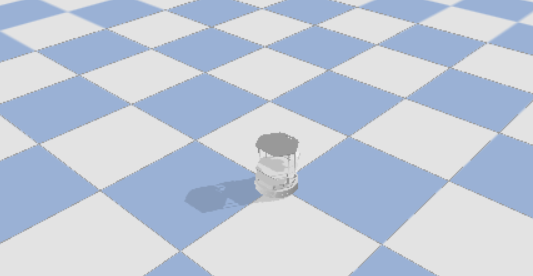}
 \end{center}
 \caption{GUI of TurtleBot in PyBullet simulator \cite{pybullet}.}
 \label{fig:bot}
\end{figure}
Next, we conduct the simulation with TurtleBot in PyBullet simulator and see whether the proposed method can be used for more realistic setting.
The URDF of TurtleBot is accessible at \cite{turtle} and the GUI of TurtleBot in PyBullet is shown in Fig. \ref{fig:bot}.
To control the TurtleBot in PyBullet, the velocity \( v \) and angular velocity \( \omega \) of the robot must be converted into individual wheel velocities. Given the wheelbase \( L \) and wheel radius \( R \), the right and left wheel velocities, \( v_r \) and \( v_l \), are computed as follows: 
\begin{equation}
    v_r = \frac{v -\omega L / 2}{R}, \quad 
    v_l = \frac{v + \omega L / 2}{R}.
\end{equation}
For TurtleBot in PyBullet, the values of $R$ and $L$ are $R=0.035$ and $L= 0.23$.
These velocities are then applied to the TurtleBot’s wheels using PyBullet’s \texttt{setJointMotorControl2} function in velocity control mode.
Moreover, while MPC is executed with a sampling period of \( 0.1 \) s, control signals are sent at a higher frequency of \( 0.01 \) s to ensure accurate execution. The control input computed by MPC is held constant within each sampling period and applied at every simulation step.
Table \ref{tab:cost_computation_comparison2} and Fig. \ref{fig:trajectory_comparison2} show the results of the PyBullet simulation, which closely resemble those obtained in the numerical experiments.

\section{Summary and Future Direction}
In this paper, we proposed a novel reference-free iterative learning MPC scheme. In the proposed method, the NN representing the terminal components of the MPC is trained with the trajectory data collected in the previous iterations to meet the conditions (\ref{cond:CLBF1})--(\ref{cond:CLBF7}).
Then, we showed that the proposed control scheme guarantees desirable properties such as recursive feasibility, stability, and non-increasing performance cost along with the number of iterations, under certain assumptions. In the simulation, we showed that the proposed scheme can iteratively improve the control performance and the time required for online control execution is largely saved compared to the previous method. In future research, we will consider extending the proposed method to uncertain or unknown dynamics settings.
\bibliographystyle{IEEEtran}
\bibliography{main}


\vfill

\end{document}